\documentclass[11pt]{article}

\usepackage[utf8]{inputenc}
\usepackage[T1]{fontenc}
\usepackage[a4paper,margin=3cm]{geometry}
\usepackage{amsmath,amssymb,amsfonts,mathtools,amsthm}
\usepackage{mathrsfs}
\usepackage[numbers,sort&compress]{natbib}
\usepackage[colorlinks=true,linkcolor=blue,citecolor=blue,urlcolor=blue]{hyperref}

\numberwithin{equation}{section}

\newtheorem{theorem}{Theorem}[section]

\theoremstyle{definition}
\newtheorem{definition}[theorem]{Definition}
\theoremstyle{remark}
\newtheorem{remark}[theorem]{Remark}

\newcommand{\ac}[2]{\{#1,#2\}}

\newcommand{\cP}{\mathcal P}
\newcommand{\SWI}{\mathrm{SWI}}

\title{The 2D Smorodinsky--Winternitz II system and the Laguerre--Heun algebra}

\author{
Vutha Vichea Chea\textsuperscript{1,2}\thanks{E-mail: vutha.vichea.chea@umontreal.ca}
\quad
Luc Vinet\textsuperscript{1,2}\thanks{E-mail: luc.vinet@umontreal.ca}
\quad
Alexei Zhedanov\textsuperscript{3}\thanks{E-mail: zhedanov@yahoo.com}
\\[0.3cm]
\textsuperscript{1}\small Centre de recherches mathématiques, Université de Montréal, Montréal, Québec, Canada\\
\textsuperscript{2}\small Département de physique, Université de Montréal, Montréal, Québec, Canada\\
\textsuperscript{3}\small Euler International Mathematical Institute, Saint Petersburg, Russia
}

\date{}

\begin{document}

\maketitle

\begin{abstract}
We identify the quadratic symmetry algebra of the two-dimensional
Smorodinsky--Winternitz II system with a Laguerre-type confluent Heun algebra.
The system is separable in Cartesian and parabolic coordinates.  The
complementary Cartesian separation operator
\[
  Y=\partial_y^2-\omega^2y^2+\frac{1/4-c^2}{y^2}
\]
is of Laguerre type, while the parabolic integral \(W=L_2\) is its algebraic
Heun partner.  With \(Z=[Y,W]\), the defining relations are
\[
 [Y,Z]=16\omega^2W-2bY,\qquad
 [W,Z]=6Y^2-4HY+2bW+8\omega^2(1-c^2),
\]
where \(H\) is central.  This gives a direct superintegrable realization of
the Laguerre--Heun algebra.
\end{abstract}

\section{Introduction}

The two-dimensional system, referred to as the Smorodinsky--Winternitz (SW) II one
\cite{winternitz1966symmetry}, is described by the quantum Hamiltonian
\begin{equation}
  H=\partial_x^2+\partial_y^2
    -\omega^2(4x^2+y^2)+bx+\frac{1/4-c^2}{y^2}.
  \label{eq:intro-H}
\end{equation}
It is superintegrable and separates in Cartesian and parabolic coordinates.
We shall use the following two algebraically independent second-order
symmetries.  The Cartesian separation operator 
\begin{equation}
  L_1=\partial_x^2-4\omega^2x^2+bx,
  \label{eq:intro-L1}
\end{equation}
and the parabolic separation operator 
\begin{equation}
  L_2=\frac12\ac{M}{\partial_y}
      -y^2\left(\frac b4-x\omega^2\right)
      +\left(\frac14-c^2\right)\frac{x}{y^2},
  \qquad
  M=x\partial_y-y\partial_x .
  \label{eq:intro-L2}
\end{equation}
Both \(L_1\) and \(L_2\) commute with \(H\).  The complementary Cartesian
separation operator is
\begin{equation}
  Y=
  \partial_y^2-\omega^2y^2+\frac{1/4-c^2}{y^2}.
  \label{eq:intro-Y}
\end{equation}
Thus \(H=L_1+Y\).  Although this model has been studied extensively
\cite{miller2013classical}, its quadratic symmetry algebra does not seem to
have been explicitly identified with one of the standardized quadratic algebras like the Hahn and Racah algebras  that have been studied in the last decades and seen to occur in superintegrability as well as in many other areas.  The purpose of this paper is to show that the algebra
generated by the second-order constants of motion of the SW II model is a Laguerre-type
confluent Heun algebra.

The algebraic connection between Heun operators and tridiagonalization was
developed in \cite{GVZ2017}.  The basic idea is the following.  Given a
hypergeometric-type operator \(Y\) with a polynomial eigenbasis, its algebraic
Heun partner is the most general second-order operator \(W\) that maps
polynomials of degree \(n\) to polynomials of degree at most \(n+1\), or,
equivalently, acts tridiagonally in the eigenbasis of \(Y\).  For the Jacobi
operator this construction yields the standard Heun operator; for the
Laguerre operator it yields a confluent Heun operator.  The algebra generated
by \(Y\), \(W\), and \(Z=[Y,W]\) is a quadratic algebra, called here the
Laguerre--Heun algebra.

The Smorodinsky--Winternitz II system gives a natural physical realization of
this construction.  Cartesian separation diagonalizes the singular oscillator
operator
\[
  Y=\partial_y^2-\omega^2y^2+\frac{1/4-c^2}{y^2},
\]
which is of Laguerre type and such that \(H=L_1+Y\).  Parabolic separation
diagonalizes the second symmetry \(L_2\).  The key observation is that \(L_2\) is precisely the
algebraic Heun partner of \(Y\).  The resulting quadratic algebra is
\begin{equation}
 [Y,Z]=16\omega^2W-2bY,
 \qquad
 [W,Z]=6Y^2-4HY+2bW+8\omega^2(1-c^2),
 \label{eq:intro-main}
\end{equation}
with
\[
        W=L_2,\qquad Z=[Y,W],
\]
and with \(H\) central.

This identification also clarifies the nature of the Cartesian--parabolic
connection problem.  Since the relevant separated operator is of Laguerre
type, the parabolic integral should be viewed as a confluent Heun operator,
rather than as an operator associated with a finite Hahn-type recoupling
problem.

The commutation relations for the Smorodinsky--Winternitz II symmetries were
given in \cite{LetourneauVinet1995}.  We use those relations as the starting
point and rewrite them in the natural Laguerre--Heun generators.  We also
recall how the parabolic separated equations arise from
diagonalizing the second conserved quantity.  The tridiagonal action of the
parabolic symmetry in the Cartesian separated basis is then derived from the
quadratic algebra.  Finally, we contrast the result with the
Smorodinsky--Winternitz I model, whose Cartesian--polar separation problem is
governed by a Hahn-type algebra and dual Hahn overlap coefficients.

\section{Algebraic Heun operators and the Laguerre case}

We review the algebraic Heun construction in the form needed below.  Let
\(Y\) be a second-order hypergeometric-type operator with a basis of
eigenfunctions \(\{p_n\}_{n\geq0}\).  An algebraic Heun operator associated
with \(Y\) is an operator \(W\) such that
\begin{equation}
  Wp_n=\xi_{n+1}p_{n+1}+\eta_n p_n+\zeta_n p_{n-1}.
  \label{eq:tri-action}
\end{equation}
In other words, \(W\) is tridiagonal in the \(Y\)-eigenbasis.  Equivalently,
in the polynomial realization, \(W\) is the most general second-order
operator that maps polynomials of degree \(n\) into polynomials of degree at
most \(n+1\).

For the Jacobi operator this tridiagonalization gives the ordinary Heun
operator.  The algebra generated by the hypergeometric operator \(Y\), its
Heun partner \(W\), and their commutator \(Z=[Y,W]\), is a quadratic algebra
which extends the Racah algebra by Heun terms \cite{GVZ2017}.  Confluent
limits give the Laguerre and Hermite cases.  In the Laguerre case, the
ordinary Heun equation degenerates to a confluent Heun equation, and the
corresponding quadratic algebra takes a contracted form.

For our purposes, the Laguerre--Heun algebra may be characterized as follows.

\begin{definition}
A Laguerre--Heun algebra is an associative algebra generated by \(Y,W,Z\),
with \(Z=[Y,W]\), and with a central element \(H\), such that
\begin{align}
 [Y,Z]&=a_1W+a_2Y+a_3,                                      \label{eq:LH1}\\
 [W,Z]&=b_1Y^2+b_2Y+b_3W+b_4,                               \label{eq:LH2}
\end{align}
where \(a_i,b_i\) are central parameters.
\end{definition}

The first relation is linear in \(Y,W\), while the second contains a single
quadratic term.  This is the algebraic signature of the Laguerre, or
confluent, degeneration.  Such Lie-type Heun algebras and their
realizations, including those associated with \(\mathfrak{su}(1,1)\), were
also described in \cite{CrampeVinetZhedanov2020}.  In the
\(\mathfrak{su}(1,1)\) family, the Laguerre realization leads to a confluent
Heun operator.

\section{The Smorodinsky--Winternitz II symmetries}

We return to the operators introduced in the introduction in order to fix
notation and conventions.  We use the normalization
\begin{equation}
  H=\partial_x^2+\partial_y^2
    -\omega^2(4x^2+y^2)+bx+\frac{1/4-c^2}{y^2}.
  \label{eq:H}
\end{equation}
The signs correspond to the convention in which the kinetic energy is
\(\partial_x^2+\partial_y^2\).  Multiplication of \(H\) by \(-1\) gives the
usual Schrödinger sign convention.

The Cartesian second-order integral is
\begin{equation}
  L_1=\partial_x^2-4\omega^2x^2+bx.
  \label{eq:L1}
\end{equation}
The complementary Cartesian separation operator is
\begin{equation}
  Y=
  \partial_y^2-\omega^2y^2+\frac{1/4-c^2}{y^2}.
  \label{eq:Y}
\end{equation}
Thus \(H=L_1+Y\).  It is the operator $Y$ that belongs to the Laguerre family.

The parabolic second-order integral is
\begin{equation}
  L_2=\frac12\ac{M}{\partial_y}
      -y^2\left(\frac b4-x\omega^2\right)
      +\left(\frac14-c^2\right)\frac{x}{y^2},
  \qquad
  M=x\partial_y-y\partial_x .
  \label{eq:L2}
\end{equation}
Both \(L_1\) and \(L_2\) commute with \(H\).  Let
\begin{equation}
  R=[L_1,L_2].
\end{equation}
The quadratic symmetry algebra is
\begin{align}
 [L_1,R]&=16\omega^2L_2+2bL_1-2bH,                         \label{eq:LV1}\\
 [L_2,R]&=-6L_1^2+8HL_1-2bL_2-2H^2-8\omega^2(1-c^2).
                                                                    \label{eq:LV2}
\end{align}
These relations are those of the Smorodinsky--Winternitz II polynomial
algebra in the normalization of \cite{LetourneauVinet1995}.  The sign of
the \(2bL_2\) term in \eqref{eq:LV2} is fixed by the Jacobi identity once
the conventions \(R=[L_1,L_2]\) and \eqref{eq:LV1} have been chosen.

We shall not need the cubic Casimir relation for the identification with the
Laguerre--Heun algebra; it is therefore not written here.

\section{Cartesian and parabolic separation}

We now recall how the two separated coordinate systems are attached to
\(L_1\) and \(L_2\).

\subsection{Cartesian separation}

Let
\[
        \Psi(x,y)=X(x)Y_0(y)
\]
be a separated eigenfunction satisfying
\[
        H\Psi=E\Psi.
\]
Since \(L_1\) involves only the variable \(x\), Cartesian separation is the
simultaneous diagonalization of \(H\) and \(L_1\):
\begin{equation}
        L_1X=\lambda X.
\end{equation}
The complementary equation is governed by the operator
\begin{equation}
        Y=\partial_y^2-\omega^2y^2+\frac{1/4-c^2}{y^2}.
        \label{eq:Ydef}
\end{equation}
After the standard gauge and variable changes,
this is the Laguerre singular-oscillator problem.  This is the operator that
will play the role of the Laguerre operator in the algebraic Heun
construction.

\subsection{Parabolic separation}

Introduce parabolic coordinates \(u,v\) by
\begin{equation}
        x=\frac{u^2-v^2}{2},\qquad y=uv.
        \label{eq:parabolic}
\end{equation}
Then
\begin{equation}
        \partial_x^2+\partial_y^2
        =
        \frac{1}{u^2+v^2}\left(\partial_u^2+\partial_v^2\right).
        \label{eq:lap-par}
\end{equation}
Multiplying the eigenvalue equation \(H\Psi=E\Psi\) by \(u^2+v^2\), and
writing \(\Psi(u,v)=U(u)V(v)\), gives
\begin{align}
0={}&
\left[
\partial_u^2
-\omega^2u^6+\frac b2u^4-Eu^2+\frac{1/4-c^2}{u^2}
\right]U(u)\,V(v)                                                   \notag\\
&+
U(u)\left[
\partial_v^2
-\omega^2v^6-\frac b2v^4-Ev^2+\frac{1/4-c^2}{v^2}
\right]V(v).
\label{eq:par-sep}
\end{align}
Thus parabolic separation is achieved by imposing
\begin{align}
\left[
\partial_u^2
-\omega^2u^6+\frac b2u^4-Eu^2+\frac{1/4-c^2}{u^2}
\right]U(u)&=\mu\,U(u),                                      \label{eq:u-eq}\\
\left[
\partial_v^2
-\omega^2v^6-\frac b2v^4-Ev^2+\frac{1/4-c^2}{v^2}
\right]V(v)&=-\mu\,V(v).                                    \label{eq:v-eq}
\end{align}
The separation constant \(\mu\) is the eigenvalue of a second-order symmetry
operator.  Indeed, define
\begin{align}
\cP=\frac{1}{u^2+v^2}
\Bigg\{&
v^2\left(
\partial_u^2-\omega^2u^6+\frac b2u^4+\frac{1/4-c^2}{u^2}
\right)                                                       \notag\\
&-
u^2\left(
\partial_v^2-\omega^2v^6-\frac b2v^4+\frac{1/4-c^2}{v^2}
\right)
\Bigg\}.
\label{eq:P}
\end{align}
On a solution of \(H\Psi=E\Psi\), the terms involving \(E\) cancel and
\[
        \cP\Psi=\mu\Psi.
\]
A direct calculation using \eqref{eq:parabolic} gives
\begin{equation}
        \cP=-2L_2.
        \label{eq:P-L2}
\end{equation}
Thus diagonalizing the parabolic integral \(L_2\) is exactly the operator
form of parabolic separation.

\begin{remark}
Equations \eqref{eq:u-eq} and \eqref{eq:v-eq} are not hypergeometric
equations in general.  This is one reason why the Cartesian--parabolic
connection problem should be viewed as a confluent Heun spectral problem
rather than as a standard Askey-scheme polynomial overlap problem.
\end{remark}

\section{The Laguerre--Heun presentation of the symmetry algebra}

We now pass from the generators \(L_1,L_2\) to the Laguerre--Heun generators
\begin{equation}
        Y=\partial_y^2-\omega^2y^2+\frac{1/4-c^2}{y^2},
        \qquad W=L_2,\qquad Z=[Y,W].
        \label{eq:assignment}
\end{equation}
Since \(H=L_1+Y\), one has \(L_1=H-Y\).  As \(H\) is central,
\[
        Z=[H-L_1,L_2]=-[L_1,L_2]=-R.
\]
Using \eqref{eq:LV1}--\eqref{eq:LV2}, we obtain the following result.

\begin{theorem}
The Smorodinsky--Winternitz II symmetry algebra is generated by \(Y,W,Z\),
with \(H\) central and \(Z=[Y,W]\), and satisfies
\begin{align}
 [Y,Z]&=16\omega^2W-2bY,                                      \label{eq:main1}\\
 [W,Z]&=6Y^2-4HY+2bW+8\omega^2(1-c^2).                         \label{eq:main2}
\end{align}
Equivalently,
\begin{align}
 [Y,[Y,W]]&=16\omega^2W-2bY,                                  \label{eq:double1}\\
 [W,[W,Y]]&=-6Y^2+4HY-2bW-8\omega^2(1-c^2).                    \label{eq:double2}
\end{align}
These are the defining relations of the Laguerre--Heun algebra associated
with the Smorodinsky--Winternitz II system.
\end{theorem}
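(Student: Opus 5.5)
The plan is to derive the relations directly from \eqref{eq:LV1}--\eqref{eq:LV2} by substitution, using only that \(H\) is central. First, since \(H=L_1+Y\) commutes with everything, \(L_1=H-Y\). As already noted, this gives
\[
Z=[Y,W]=[H-L_1,L_2]=-R .
\]
Each commutator in \eqref{eq:main1}--\eqref{eq:main2} will then be rewritten in terms of \([L_1,R]\) and \([L_2,R]\), with \(H\) pulled out of all brackets.

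For \eqref{eq:main1}, I would write \([Y,Z]=[H-L_1,-R]=[L_1,R]\). By \eqref{eq:LV1} this equals \(16\omega^2L_2+2bL_1-2bH\). Substituting \(L_1=H-Y\) makes the \(H\)-terms cancel:
\[
2b(H-Y)-2bH=-2bY,
\]
which gives \(16\omega^2W-2bY\).

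For \eqref{eq:main2}, I would write \([W,Z]=[L_2,-R]=-[L_2,R]\). Then insert \eqref{eq:LV2} and expand \(L_1^2=(H-Y)^2=H^2-2HY+Y^2\). Because \(H\) is central, \(HY=YH\), so there is no ordering ambiguity. The computation is
\[
-[L_2,R]=6(H^2-2HY+Y^2)-8H(H-Y)+2bL_2+2H^2+8\omega^2(1-c^2).
\]
The \(H^2\) terms cancel, since \(6-8+2=0\). The \(HY\) terms collect to \(-12+8=-4\), which yields \(6Y^2-4HY+2bW+8\omega^2(1-c^2)\).

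The equivalent forms follow from \(Z=[Y,W]=-[W,Y]\):
\[
[Y,[Y,W]]=[Y,Z],\qquad [W,[W,Y]]=-[W,Z].
\]
Generation holds because \(L_1=H-Y\), \(L_2=W\) and \(R=-Z\), so the algebra generated by \(L_1,L_2,R\) together with \(H\) coincides with the one generated by \(Y,W,Z\) together with \(H\). Comparing with \eqref{eq:LH1}--\eqref{eq:LH2} identifies the parameters:
\[
a_1=16\omega^2,\quad a_2=-2b,\quad a_3=0,\quad b_1=6,\quad b_2=-4H,\quad b_3=2b,\quad b_4=8\omega^2(1-c^2).
\]

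The derivation itself is routine; its main dependency is the correctness of \eqref{eq:LV1}--\eqref{eq:LV2}, in particular the sign of \(2bL_2\). As a check, I would verify the Jacobi-type consistency condition on the final relations, or compute \([Y,[Y,L_2]]\) directly from the differential operators \eqref{eq:Y} and \eqref{eq:L2}. That direct check is short because \(Y\) acts only in \(y\), so only the \(y\)-dependent parts of \(L_2\) contribute.
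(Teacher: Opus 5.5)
Your proposal is correct and is essentially the paper's own proof: substitute $L_1=H-Y$, $L_2=W$, $R=-Z$ into \eqref{eq:LV1}--\eqref{eq:LV2} and use the centrality of $H$ to cancel the $H$ and $H^2$ terms. Your intermediate expression with $+2bL_2$ is the right one; the paper's displayed intermediate step has $-2bW$, a sign slip that its final line silently corrects, and a direct computation with the differential operators confirms both relations as stated.
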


\begin{proof}
Substitute \(L_1=H-Y\), \(L_2=W\), and \(R=-Z\) into
\eqref{eq:LV1}.  Since
\[
        [L_1,R]=[H-Y,-Z]=[Y,Z],
\]
the first relation gives
\[
 [Y,Z]
 =
 16\omega^2W+2b(H-Y)-2bH
 =
 16\omega^2W-2bY.
\]
Similarly, from \eqref{eq:LV2},
\[
        [L_2,R]=[W,-Z]=-[W,Z].
\]
Hence
\begin{align*}
[W,Z]
&=
6(H-Y)^2-8H(H-Y)-2bW+2H^2+8\omega^2(1-c^2)  \\
&=
6Y^2-4HY+2bW+8\omega^2(1-c^2).
\end{align*}
This proves \eqref{eq:main1}--\eqref{eq:main2}.  The double-commutator form
follows from \(Z=[Y,W]\).
\end{proof}

In the notation of the defining relations \eqref{eq:LH1}--\eqref{eq:LH2}, this realization
corresponds to the central structure constants
\[
        a_1=16\omega^2,\qquad a_2=-2b,\qquad a_3=0,
\]
and
\[
        b_1=6,\qquad b_2=-4H,\qquad b_3=2b,\qquad
        b_4=8\omega^2(1-c^2).
\]

The theorem gives the precise meaning of the statement that the parabolic
integral is the Heun partner of the Cartesian Laguerre operator.  The relation
\[
        [Y,[Y,W]]=16\omega^2W-2bY
\]
implies that, in a basis diagonalizing \(Y\), the operator \(W\) is
tridiagonal.  Thus the algebraic fact that \(W\) is a Heun operator is the
same as the separability fact that \(W=L_2\) is the operator whose
diagonalization gives parabolic coordinates.

\section{The representation in the Cartesian basis}

We now describe the action of the parabolic integral \(L_2\) in a basis
diagonalizing the Laguerre operator
\[
        Y=\partial_y^2-\omega^2y^2+\frac{1/4-c^2}{y^2}.
\]
This gives the concrete representation-theoretic meaning of the statement
that \(L_2\) is the algebraic Heun partner of \(Y\).

Fix an energy eigenspace, so that \(H=E\) is scalar.  The operator \(Y\) is
the one-dimensional singular oscillator.  Its polynomial sector realizes a
positive discrete series representation of \(\mathfrak{su}(1,1)\).  In the
normalization used here, the spectrum of \(Y\) on this basis is equally
spaced:
\begin{equation}
        Y e_n=\lambda_n e_n,
        \qquad
        \lambda_{n+1}-\lambda_n=4\omega .
        \label{eq:Y-spectrum}
\end{equation}
Equivalently,
\begin{equation}
        \lambda_n=\lambda_0+4\omega n.
        \label{eq:lambda-linear}
\end{equation}
Here \(\lambda_0\) is determined by the Bargmann index of the
\(\mathfrak{su}(1,1)\) representation, hence by the parameter \(c\), together
with the sign convention chosen for the differential operator.  Only the
overall sign and ordering of the sequence depend on conventions; the constant
spacing is fixed by the \(\mathfrak{su}(1,1)\) discrete-series realization of
the singular oscillator.

Set \(W=L_2\).  From
\[
        [Y,[Y,W]]=16\omega^2W-2bY,
\]
one immediately obtains, for \(m\neq n\),
\begin{equation}
        \big((\lambda_m-\lambda_n)^2-16\omega^2\big)
        \langle e_m,W e_n\rangle=0.
        \label{eq:tri-condition}
\end{equation}
Since adjacent eigenvalues of \(Y\) differ by \(4\omega\), the only
off-diagonal matrix elements of \(W\) can occur between nearest-neighbour
eigenspaces.  Thus \(W\) is tridiagonal in the \(Y\)-eigenbasis.  We may
therefore write
\begin{equation}
        W e_n
        =
        A_n e_{n+1}+B_n e_n+C_n e_{n-1},
        \qquad C_0=0.
        \label{eq:W-action}
\end{equation}

The diagonal coefficient follows from the diagonal part of
\([Y,[Y,W]]=16\omega^2W-2bY\).  Since the left-hand side has zero diagonal
part, one finds
\[
        0=16\omega^2 B_n-2b\lambda_n,
\]
and hence
\begin{equation}
        B_n=\frac{b}{8\omega^2}\lambda_n.
        \label{eq:Bn}
\end{equation}

The products of the off-diagonal coefficients are determined by the second
Laguerre--Heun relation.  Let
\begin{equation}
        U_n=A_{n-1}C_n,\qquad U_0=0.
        \label{eq:Un-def}
\end{equation}
Using \(Z=[Y,W]\) and \eqref{eq:Y-spectrum}, one has
\begin{equation}
        Z e_n=4\omega A_n e_{n+1}-4\omega C_n e_{n-1}.
        \label{eq:Z-action}
\end{equation}
A direct computation gives
\begin{equation}
        [W,Z]e_n
        =
        8\omega\,(U_{n+1}-U_n)e_n
        +2bA_n e_{n+1}
        +2bC_n e_{n-1}.
        \label{eq:WZ-action}
\end{equation}
On the other hand, the quadratic relation
\[
        [W,Z]=6Y^2-4EY+2bW+8\omega^2(1-c^2)
\]
gives, on \(e_n\),
\begin{align}
        [W,Z]e_n
        &=
        \left(6\lambda_n^2-4E\lambda_n
        +2bB_n+8\omega^2(1-c^2)\right)e_n             \notag\\
        &\quad
        +2bA_n e_{n+1}
        +2bC_n e_{n-1}.
        \label{eq:WZ-action-2}
\end{align}
The off-diagonal parts agree identically, while the diagonal part yields
\begin{equation}
        U_{n+1}-U_n
        =
        \frac{
        6\lambda_n^2-4E\lambda_n
        +\dfrac{b^2}{4\omega^2}\lambda_n
        +8\omega^2(1-c^2)}
        {8\omega}.
        \label{eq:Un-rec}
\end{equation}
Thus
\begin{equation}
        U_n
        =
        \frac{1}{8\omega}
        \sum_{k=0}^{n-1}
        \left(
        6\lambda_k^2
        -4E\lambda_k
        +\frac{b^2}{4\omega^2}\lambda_k
        +8\omega^2(1-c^2)
        \right).
        \label{eq:Un-sum}
\end{equation}
If \(\lambda_n=\lambda_0+4\omega n\), the sum can be evaluated explicitly:
\begin{align}
U_n
={}&
\frac{1}{8\omega}
\Bigg[
n\left(6\lambda_0^2+
\left(-4E+\frac{b^2}{4\omega^2}\right)\lambda_0
+8\omega^2(1-c^2)\right)                         \notag\\
&\quad
+24\omega\lambda_0\,n(n-1)
+2\omega\left(-4E+\frac{b^2}{4\omega^2}\right)n(n-1)       \notag\\
&\quad
+16\omega^2 n(n-1)(2n-1)
\Bigg].
\label{eq:Un-explicit}
\end{align}
In an orthonormal realization in which \(W\) is symmetric, one may choose
\begin{equation}
        A_n=C_{n+1}=\sqrt{U_{n+1}},
        \label{eq:symmetric-choice}
\end{equation}
up to harmless phase conventions.  Therefore
\begin{equation}
        L_2 e_n
        =
        \sqrt{U_{n+1}}\, e_{n+1}
        +
        \frac{b}{8\omega^2}\lambda_n e_n
        +
        \sqrt{U_n}\, e_{n-1}.
        \label{eq:L2-action}
\end{equation}
This is the desired tridiagonal action of the parabolic symmetry in the
Cartesian separated basis.

For a finite-dimensional bound-state representation, say
\[
        n=0,1,\ldots,N,
\]
one has the boundary conditions
\begin{equation}
        U_0=0,\qquad U_{N+1}=0.
        \label{eq:finite-boundary}
\end{equation}
In such a finite-dimensional realization, the second condition is the usual
truncation condition and determines the admissible central character,
equivalently the allowed energy \(E\), in terms of the representation label
\(N\) and the parameters.  In this form, the parabolic separation problem
becomes the spectral problem of the tridiagonal matrix \eqref{eq:L2-action}.
Since the coefficients \(U_n\) are cubic in \(n\) in general, the resulting
connection problem is naturally of confluent Heun type rather than a classical
finite dual-Hahn problem.

\begin{remark}
The spacing in \eqref{eq:Y-spectrum} is the singular-oscillator
\(\mathfrak{su}(1,1)\) spacing in the present normalization.  If one rescales
the Hamiltonian or the symmetry operators, both this spacing and the
coefficient \(16\omega^2\) in the double commutator are rescaled
accordingly; the tridiagonal structure is unchanged.
\end{remark}

\section{Comparison with the Smorodinsky--Winternitz I system}

It is useful to contrast the preceding result with the 
Smorodinsky--Winternitz I system, namely the isotropic singular oscillator
\begin{equation}
H_{\SWI}
=
\partial_x^2+\partial_y^2
-\omega^2(x^2+y^2)
+\frac{\alpha}{x^2}+\frac{\beta}{y^2}.
\label{eq:SWI}
\end{equation}
This system separates in Cartesian and polar coordinates.  Its quadratic
symmetry algebra is of Hahn type, or equivalently a degeneration of the Racah
algebra.  In representation-theoretic terms, the change of basis between
Cartesian and polar separated eigenfunctions is a finite recoupling problem;
the associated overlap coefficients are described, in the standard finite
models, by dual Hahn polynomials, up to conventional changes of parameters
and normalization \cite{Post2011,KoekoekLeskySwarttouw2010}.

The situation for the Smorodinsky--Winternitz II system is different.  The
Cartesian separated operator singled out in this paper is the Laguerre-type
operator
\[
        Y=H-L_1,
\]
and the parabolic integral
\[
        W=L_2
\]
is its confluent Heun partner.  Thus the Cartesian--parabolic connection
problem is not expected, in general, to reduce to a classical finite
dual-Hahn overlap problem.  It is instead governed by the spectral theory of
the corresponding Laguerre--Heun operator.

This contrast may be summarized schematically as follows:
\begin{center}
\renewcommand{\arraystretch}{1.25}
\begin{tabular}{c|c|c|c}
system & coordinates & symmetry algebra & connection problem \\
\hline
SW I & Cartesian/polar & Hahn type &
dual Hahn-type finite recoupling \\
SW II & Cartesian/parabolic & Laguerre--Heun &
confluent Heun spectral problem
\end{tabular}
\end{center}
The second line is the content of the present paper.  This comparison should
not be read as a claim that the two systems are parallel in every respect.
Rather, it emphasizes the algebraic distinction between a finite Hahn
recoupling problem and a confluent Heun connection problem.

\section{Relation with the finite Heun--Hahn algebra}

The finite-grid Heun--Hahn algebra arises from the tridiagonalization of the
Hahn operator on a uniform lattice \cite{VZ2019}.  It is an important member
of the family of Heun algebras, but it is not the generic algebra of the
Smorodinsky--Winternitz II system.

The reason is structural.  The finite Heun--Hahn algebra is tied to a finite
difference Hahn operator and hence to a finite discrete spectral grid.  The
operator \(Y\) in the present paper is instead a Laguerre-type differential
operator, obtained after confluence.  Its Heun partner is
therefore a confluent Heun operator.  The resulting algebra has the
Laguerre--Heun form \eqref{eq:main1}--\eqref{eq:main2}, not the finite
Heun--Hahn form.

There are parameter specializations in which the relations simplify further
and resemble finite Hahn-type formulas.  Such coincidences should not obscure
the main point: the natural algebraic interpretation of the full
Smorodinsky--Winternitz II symmetry algebra is the Laguerre--Heun one.

\section{Conclusion}

In summary, the symmetry algebra of the Smorodinsky--Winternitz II system is a Laguerre--Heun algebra or put differently, this superintegrable model provides a natural
realization of this quadratic algebra.  The central observation is that one the two complementary Cartesian separation operators is the Laguerre one:

\[
        Y=
        \partial_y^2-\omega^2y^2+\frac{1/4-c^2}{y^2}.
\]
The parabolic integral
\[
        W=L_2
\]
is the algebraic Heun partner of \(Y\).  The commutation relations are
\[
 [Y,Z]=16\omega^2W-2bY,
 \qquad
 [W,Z]=6Y^2-4HY+2bW+8\omega^2(1-c^2),
\]
where \(Z=[Y,W]\) and \(H\) is central.  The tridiagonal action
\[
        L_2 e_n
        =
        \sqrt{U_{n+1}}\, e_{n+1}
        +
        \frac{b}{8\omega^2}\lambda_n e_n
        +
        \sqrt{U_n}\, e_{n-1}
\]
makes explicit the representation-theoretic meaning of this identification.

This formulation makes transparent that the associated connection or overlap problem is
governed by the representations of this confluent Heun algebra.

The comparison with SW I is instructive.  In SW I, the Cartesian--polar
separation problem leads to a Hahn-type algebra and to dual Hahn overlap
coefficients.  For SW II, the parabolic integral is instead the confluent
Heun partner of a Laguerre operator.  This is the essential algebraic
distinction between the two Smorodinsky--Winternitz systems. Furthermore, in the wake of the recent work on the spectrum generating algebra of the generic superintegrable model on the two-sphere \cite{Crampe2026dyn}, it would be of interest to determine the (rank two) dynamical algebras of these two Smorodinsky-Winternitz models.

\section*{Acknowledgments}

VVC is enjoying a scholarship from the Fonds de Recherche du Québec - Nature et Technologie (FRQ-NT). LV is funded in part through a discovery grant of the Natural Sciences and Engineering Research Council (NSERC) of Canada. 
 AZ is supported by the Ministry of Science and Higher Education of the Russian Federation (agreement no. 075--15--2025--343).


\begin{thebibliography}{99}

\bibitem{winternitz1966symmetry}
P.~Winternitz, Ya.~A. Smorodinsky, M.~Uhlir and I.~Fri\v{s},
Symmetry groups in classical and quantum mechanics,
\emph{Yad. Fiz.} \textbf{4} (1966), 625--635;
English translation:
\emph{Sov. J. Nucl. Phys.} \textbf{4} (1967), 444--450.

\bibitem{miller2013classical}
W.~Miller Jr., S.~Post and P.~Winternitz,
Classical and quantum superintegrability with applications,
\emph{J. Phys. A: Math. Theor.} \textbf{46} (2013), 423001.

\bibitem{LetourneauVinet1995}
P.~Létourneau and L.~Vinet,
Superintegrable systems: polynomial algebras and quasi-exactly solvable
Hamiltonians,
\emph{Ann. Phys.} \textbf{243} (1995), 144--168.

\bibitem{GVZ2017}
F.~A. Grünbaum, L.~Vinet and A.~Zhedanov,
Tridiagonalization and the Heun equation,
\emph{J. Math. Phys.} \textbf{58} (2017), 031703,
\texttt{arXiv:1602.04840}.

\bibitem{CrampeVinetZhedanov2020}
N.~Crampé, L.~Vinet and A.~Zhedanov,
Heun algebras of Lie type,
\emph{Proc. Amer. Math. Soc.} \textbf{148} (2020), 1079--1094,
\texttt{arXiv:1904.10643}.

\bibitem{VZ2019}
L.~Vinet and A.~Zhedanov,
The Heun operator of Hahn-type,
\emph{Proc. Amer. Math. Soc.} \textbf{147} (2019), 2987--2998,
\texttt{arXiv:1808.00153}.

\bibitem{Post2011}
S.~Post,
Models of quadratic algebras generated by superintegrable systems in 2D,
\emph{SIGMA} \textbf{7} (2011), 036, 20 pp.,
\texttt{arXiv:1104.0734}.

\bibitem{KoekoekLeskySwarttouw2010}
R.~Koekoek, P.~A. Lesky and R.~F. Swarttouw,
\emph{Hypergeometric Orthogonal Polynomials and Their \(q\)-Analogues},
Springer Monographs in Mathematics, Springer, Berlin, 2010.

\bibitem{Crampe2026dyn}
N.~Crampé, Q.~Labriet, L.~Morey, S.~Tsujimoto, L.~Vinet, A.~Zhedanov,
\emph{The dynamical algebra of the generic superintegrable model on the two-sphere}
\texttt{arXiv:2604.26122}.
\end{thebibliography}
\end{document}